\definecolor{codegreen}{rgb}{0,0.6,0}
\definecolor{codegray}{rgb}{0.5,0.5,0.5}
\definecolor{codepurple}{rgb}{0.58,0,0.82}
\definecolor{backcolour}{rgb}{0.95,0.95,0.92}
\lstdefinestyle{mystyle}{
    backgroundcolor=\color{white},   
    commentstyle=\color{codegreen},
    keywordstyle=\color{magenta},
    numberstyle=\tiny\color{codegray},
    stringstyle=\color{codepurple},
    basicstyle=\ttfamily\footnotesize,
    breakatwhitespace=false,         
    breaklines=true,                 
    captionpos=b,                    
    keepspaces=true,                 
    numbers=left,                    
    numbersep=5pt,                  
    showspaces=false,                
    showstringspaces=false,
    showtabs=false,                  
    tabsize=2
}
\theoremstyle{definition}
\DeclareRobustCommand{\bbone}{\text{\usefont{U}{bbold}{m}{n}1}}
\def\CC {{\mathbb C}}     
\def\PP {{\mathbb P}}     
\def\RR {{\mathbb R}}     
\def\mc {\mathcal}
\newcommand{\tr}{\mbox{Tr}}
\DeclarePairedDelimiterX\bk[2]{\langle}{\rangle}{#1 \delimsize\vert #2}
\DeclarePairedDelimiterX\kb[2]{\vert }{\vert }{#1 \delimsize\rangle\langle#2}
\def\={\;=\;} \def\+{\,+\,}
\newtheorem{theorem}{Theorem}
\newtheorem{definition}[theorem]{Definition}
\apptocmd\appendix{%
  \addcontentsline{toc}{chapter}{Appendix}%
  \counterwithin{equation}{section}%
  \counterwithin{figure}{section}%
  \counterwithin{table}{section}%
}{}{}
\begin{document}

\title{Symplectic Structures in Quantum Entanglement}
\author{Piotr Dulian$^{1, 2}$ and Adam Sawicki $^{1,3}$}

\date{%
    $^1$ Center for Theoretical Physics, Polish Academy of Sciences,\\ Al. Lotnik\'ow 32/46, 02-668 Warszawa, Poland,\\
    $^2$ Centre for Quantum Optical Technologies, Centre of New Technologies,\\ University of Warsaw, Banacha 2c, 02-097 Warszawa, Poland\\
    $^3$ Guangdong Technion - Israel Institute of Technology, 241 Daxue Road, Jinping District, Shantou, Guangdong Province, China
 }
\maketitle

\begin{abstract}
In this work, we explore the implications of applying the formalism of symplectic geometry to quantum mechanics, particularly focusing on many-particle systems. We extend the concept of a symplectic indicator of entanglement, originally introduced by Sawicki et al. \cite{sawicki2011}, to these complex systems. Specifically, we demonstrate that the restriction of the symplectic structure to manifolds comprising all states characterized by isospectral reduced one-particle density matrices, \( M_{\mu(\psi)}^0 \), exhibits degeneracy for non-separable states. We prove that the degree of degeneracy at any given state \( \ket{\varphi} \in M_{\mu(\psi)}^0 \) corresponds to the degree of degeneracy of the symplectic form \( \omega \) when restricted to the manifold of states that are locally unitary equivalent with \( \ket{\varphi} \). Additionally, we provide a physical interpretation of this symplectic indicator of entanglement, articulating it as an inherent ambiguity within the associated classical dynamical framework. Our findings underscore the pivotal role of symplectic geometry in elucidating entanglement properties in quantum mechanics and suggest avenues for further exploration into the geometric structures underlying quantum state spaces.

\end{abstract}

\section{Introduction}
From the very inception of quantum mechanics, quantum entanglement \cite{horodecki2009} has emerged as a crucial feature that distinguishes it from classical theories \cite{einstein1935, schrodinger1935}. Various formalisms have been proposed to articulate this difference, including Bell inequalities \cite{bell1964} and entropic inequalities \cite{cerf1997, horodecki1994}, alongside numerous methods for detecting quantum entanglement. Techniques such as positive partial transpose \cite{peres1996a}, positive maps \cite{horodecki1996a, choi1972}, and entanglement witnesses \cite{terhal2000a, terhal2000b} have been developed, as well as methods for quantifying entanglement \cite{vedral1998, vedral1997b}. To this day, quantum entanglement remains a vibrant field of research, with applications spanning communication \cite{brukner2004, buhrman2001a, cleve1997, bennett1996}, spatial orientation \cite{bovino2006b, brukner2005}, improvements in frequency standards \cite{giovannetti2004, huelga1997, wineland1992}, clock synchronization \cite{jozsa2000}, and cryptography \cite{collins2002, gisin1999, gisin2000}.

In this work, we investigate the implications of applying the formalism of symplectic geometry, commonly utilized in classical mechanics, to the realm of quantum mechanics developed in \cite{dirac1972, strocchi1966, Weinberg1989, cantoni1982, heslot1985, BRODY2001, anandan1990, cirelli1990, GIBBONS1992}. The symplectic structure of the state space has proven to be particularly valuable in addressing local unitary equivalence issues \cite{SWK13, SK11}, in the classification of states concerning Stochastic Local Operations and Classical Communication \cite{SOK12,SOK14,MS18}, in the study of Berry phases \cite{page1987, samuel1988, benedict1989} or twistor theory \cite{hughston1995}.

For a finite-dimensional Hilbert space \(\mathcal{S}\), the space of pure states is represented as a complex projective space, \(\mathbb{P}(\mathcal{S})\), which is endowed with a natural symplectic structure defined via the inner product. Consequently, for any quantum Hamiltonian \(\hat{H}:\mathcal{S} \rightarrow \mathcal{S}\), one can define a function on \(\mathbb{P}(\mathcal{S})\), which we refer to as the classical Hamiltonian function corresponding to \(\hat{H}\):

\[
\mathcal{H}: \mathbb{P}(\mathcal{S}) \ni \ket{\psi} \mapsto \bra{\psi}\hat{H}\ket{\psi} \in \mathbb{R}.
\]

The integral lines of the corresponding Hamiltonian vector field \(V_{\mathcal{H}}\) precisely represent the solutions of the Schrödinger equation associated with \(\hat{H}\).

To encapsulate quantum correlations within this framework, we consider a multipartite system \(\mathcal{S} = \mathcal{S}_{1} \otimes \ldots \otimes \mathcal{S}_{L}\), composed of \(L\) distinguishable subsystems, each governed by non-interacting Hamiltonians \(\hat{H}\), which we refer to as local. The corresponding classical local Hamiltonians are functions of the reduced one-particle density matrices. Given that quantum correlations are invariant under the quantum evolution generated by local Hamiltonians, it is natural to partition the space \(\mathbb{P}(\mathcal{S})\) into submanifolds comprising states with isospectral reduced one-particle density matrices. These submanifolds, generally collections of the orbits of \(\mathbb U(\mathcal{S}_1)\times \ldots \times \mathbb U(\mathcal{S}_L)\) in \(\mathbb{P}(\mathcal{S})\), may not exhibit a symplectic structure. If this is the case, for any local \(\hat{H}_{\text{loc}}\), the associated Hamiltonian function \(\mathcal{H}_{\text{loc}}(\psi) = \bra{\psi}\hat{H}_{\text{loc}}\ket{\psi}\) does not uniquely define the Hamiltonian vector field, but rather one up to so-called null vector fields. Notably, among the orbits of \(\mathbb U(\mathcal{S}_1)\times \ldots \times  \mathbb U(\mathcal{S}_L)\), there exists precisely one that is symplectic: the orbit of separable states \cite{sawicki2011}. Thus, we can interpret entanglement as an ambiguity within the classical dynamics corresponding to local quantum Hamiltonians. The case of two-partite systems (\(L=2\)) was previously examined in \cite{sawicki2011}. This setting is particularly straightforward, as the submanifolds corresponding to states with isospectral reduced one-particle density matrices are exactly the orbits of \(\mathbb U(\mathcal{S}_1)\times \mathbb U(\mathcal{S}_2)\), and the formula for the degeneracy of \(\omega\) restricted to such an orbit was derived in \cite{sawicki2011} using the Kostant-Sternberg theorem \cite{kostant82}. 
In this paper, we expand our focus to the multipartite case, wherein a manifold of states possessing isospectral reduced one-particle density matrices is indeed a collection of \(\mathbb U(\mathcal{S}_1)\times \ldots \times \mathbb U(\mathcal{S}_L)\) orbits. By employing symplectic reduction techniques in Theorem \ref{thm:main}, which is our main result, we derive a formula for the degeneracy of \(\omega\) restricted to such a manifold. This provides further interpretation of the notion of symplectic form degeneracy as an indicator of entanglement, extending our understanding into the multipartite scenario.

The paper is organised as follows. In Section \ref{sec:class_mech} we review a mathematical formalism of classical mechanics using the notion of symplectic manifolds. Moreover, Section \ref{sec:presymplectit} discusses a crucial notion of presymplectic manifolds and describes how one defines classical equations of motion in this case. Next, in Section \ref{sec:quant_mech} we show how this formalism can be applied in quantum scenario and we discuss how this symplectic structure interacts with local dynamics. In Section \ref{sec:bipartite} we explain how it can be related to quantum entanglement by going through an example of a bipartite system. Finally, in section \ref{sec:multipartite} we generalise our result to a multipartite case.

\section{Classical mechanics as a symplectic theory}\label{sec:class_mech}
In classical mechanics one considers a phase space $\mathbb{R}^{2n}$ of position and momenta coordinates that are traditionally denoted as
\begin{equation}
    \mathbf{x} = (\mathbf{q}, \mathbf{p}) = (q^1, ..., q^n, p_1, ..., p_n).
\end{equation}
The dynamics is given by a Hamiltonian function:
\begin{equation}
    \mathcal{H}: \mathbb{R}^{2n} \rightarrow \mathbb{R},
\end{equation}
and Hamilton's equations:
\begin{equation}
    \dot{q}^i = \frac{\partial \mathcal{H}}{\partial p_i}, \quad \dot{p}_i = -\frac{\partial \mathcal{H}}{\partial q^i}.
\end{equation}
It is well known that classical mechanics can be expressed more formally as a theory that lives on a smooth manifold $M = \mathbb{R}^{2n}$ equipped with a bilinear map on Hamiltonian functions:
\begin{equation}
    \{\mathcal{F}, \mathcal{G}\} = \sum_{i=1}^n \left( \frac{\partial \mathcal{F}}{\partial q^i} \frac{\partial \mathcal{G}}{\partial p_i} - \frac{\partial \mathcal{F}}{\partial p_i} \frac{\partial \mathcal{G}}{\partial q^i} \right),
\end{equation}
called Poisson bracket. For any Hamiltonian function $\mathcal{H}$ we can construct a vector field $V_\mathcal{H}$:
\begin{equation}
    V_\mathcal{H} = \{ \cdot, \mathcal{H} \} = \sum_{i=1}^n \left( \frac{\partial \mathcal{H}}{\partial p_i} \frac{\partial}{\partial q^i} - \frac{\partial \mathcal{H}}{\partial q^i} \frac{\partial}{\partial p_i} \right).
\end{equation}
All vector fields that arise in this way are called Hamiltonian vector fields. Of course the vector filed $V_\mathcal{H}$ can be interpreted either as a differentiation or as a function from $M$ to the tangent bundle - $TM$. Using the second interpretation Hamilton's equations can be expressed as:
\begin{equation}
    \dot{\mathbf{x}} = V_\mathcal{H}(\mathbf{x}).
\end{equation}
In other words, solutions of Hamilton's equations are integral curves of vector field - $V_\mathcal{H}$, which can be concisely expressed using the notion of the vector flow of this field - $\Phi_{V_\mathcal{H}}$:
\begin{equation}
    \mathbf{x}(t) = \Phi_{V_\mathcal{H}}(t) \mathbf{x}_0.
\end{equation}

To check whether a given vector field
\begin{equation}
    V = \sum_{i=1}^n A^i \frac{\partial}{\partial q^i} + B_i \frac{\partial}{\partial q_i},
\end{equation}
is Hamiltonian it is useful to employ the symplectic form:
\begin{equation}\label{eq:canonic_sympl_form}
    \omega = \sum_{i=1}^n dp_i \wedge dq^i,
\end{equation}
in the following way. Consider 1-form:
\begin{equation}\label{eq:alpha1}
    \alpha = \omega(V, \cdot) = \sum_{i=1}^n \left( B_i dq^i - A^i dp_i \right).
\end{equation}
Since $M$ is contractible if $d\alpha = 0$ then there exists a Hamiltonian $\mathcal{H}$ such that
\begin{equation}\label{eq:alpha2}
    \alpha = -d\mathcal{H} = -\sum_{i=1}^n \left( \frac{\partial\mathcal{H}}{\partial q^i} dq^i + \frac{\partial\mathcal{H}}{\partial p_i} dp_i \right).
\end{equation}
Comparing \eqref{eq:alpha1} and \eqref{eq:alpha2} we obtain:
\begin{equation}
    A^i = \frac{\partial\mathcal{H}}{\partial p_i}, \quad B_i = -\frac{\partial\mathcal{H}}{\partial q^i},
\end{equation}
thus $V = \{\cdot, \mathcal{H}\}$. Analogously one can show that for Hamiltonian vector field $V_\mathcal{H}$ we have $d(\omega(V_\mathcal{H}, \cdot)) = 0$. Moreover, one can easily check that
\begin{equation}
    \{ \mathcal{F}, \mathcal{G}\} = -\omega(V_\mathcal{F}, V_\mathcal{G}).
\end{equation}
Thus symplectic form $\omega$ gives us the whole information about the relation between Hamiltionians and Hamiltonian vector fields and as such can be treated as a corner stone of classical mechanics theory. This leads to the introduction of abstract symplectic manifolds:
\begin{definition}\label{def:symplectic_manifold}
A symplectic manifold is a smooth manifold $M$ equipped with differential 2-form $\omega$ which is closed, $d\omega = 0$, and nondegenerate:
\begin{equation}
    \forall_u \; \omega(v, u) = 0 \;\Rightarrow\; v = 0.
\end{equation}    
Such a form is called symplectic form on $M$.
\end{definition}
The closeness of $\omega$ is needed for Poisson bracket to satisfy Jacobi identity and nondegeneration ensures that for any Hamiltonian function $\mathcal{H}$ one can uniquely choose related vector field as the one that satisfies $-d\mathcal{H} = \omega(V, \cdot)$. Furthermore, those two conditions combined by Darboux theorem \cite{arnold2013} allow for local choice of coordinates $(\mathbf{q}, \mathbf{p})$ on $M$ such that $\omega$ takes the form as in \eqref{eq:canonic_sympl_form}.


\subsection{Presymplectic manifolds}\label{sec:presymplectit}

It will be beneficial for the discussion in subsequent sections to investigate the implications of allowing the symplectic form \(\omega\) to be degenerate while maintaining its closed property. In this scenario, the manifold \(M\) is referred to as presymplectic, and the form \(\omega\), designated as the presymplectic form, fails to establish a one-to-one correspondence between \(1\)-forms and vector fields. To elucidate the consequences of this lack of isomorphism, we introduce the null tangent space at a point \(x \in M\)
\begin{equation}
    D_x = \{ v \in T_x M \,|\, \forall_u \omega(v, u) = 0 \},
\end{equation}
and the null distribution on $M$
\begin{equation}
    D = \bigcup_{x\in M} \{x\} \times D_x \subset TM.
\end{equation}
If $D$ is regular, that is the dimension of $D_x$ is the same for all $x\in M$, then there are smooth vector fields on $M$ called null that for any $x \in M$ belong to $D_x$. In this case, using Frobenious theorem, one can also show that the distribution $D$ is integrable \cite{bursztyn2013}. If $d\mathcal{H}$ lays outside of the image of
\begin{equation}\label{nutka-presym}
    V \mapsto \omega(V, \cdot).
\end{equation}
then there is no Hamiltonian vector field corresponding to $\mc H$ and we cannot consider Hamilton's equations of motion. This happens exactly when $d\mc H(D)\neq 0$ \cite{bursztyn2013}.

On the other hand, when $d\mc H$ belongs to the image of \eqref{nutka-presym} there are multiple vector fields corresponding to $d\mathcal{H}$. To see this assume  $V_\mathcal{H}$ is a Hamiltonian vector field of $\mathcal{H}$ and $V_0$ is null then $V'_\mathcal{H} = V_\mathcal{H} + V_0$ is also a Hamiltonian vector field of $\mathcal{H}$ as:
\begin{gather}
    \omega(V_\mathcal{H}', \cdot) = \omega(V_\mathcal{H}, \cdot) + \omega(V_0, \cdot) =\\
    =\omega(V_\mathcal{H}, \cdot) = -d\mathcal{H}.
\end{gather}
We will call Hamiltonian functions satisfying 
\begin{equation}\label{eq:admissible_cond}
    d\mathcal{H}(D) = 0.
\end{equation}
\textit{admissible} and by $\mathcal{A}(M)$ we will denote the space of functions that satisfy \eqref{eq:admissible_cond}. Since tangent and cotangent vector bundles are isomorphic condition \eqref{eq:admissible_cond} is both necessary and sufficient for the existence of a Hamiltonian vector field of $\mathcal{H}$ \cite{bursztyn2013}. 
Therefore for a presymplectic $(M,\omega)$ only admissible Hamiltonian functions give Hamiltonian dynamics. This dynamics, in contrast to the symplectic case is not unique. The freedom of choosing Hamiltonian vector field corresponding to an admissible $\mc H$ grows with the dimension of the null distribution $D$. Finally, if a Hamiltonian function $\mathcal{H}$ is not admissible the notion of a corresponding Hamiltonian vector field and Hamilton's equations make no sense. 

\section{Quantum dynamics and Hamiltonian equations of motions}\label{sec:quant_mech}
In this section, we examine how the symplectic structure facilitates the interpretation of quantum dynamics within the framework of classical equations of motion.

Consider quantum system described by Hilbert space $\mathcal{S}\cong \mathbb{C}^d$. We define an equivalence relation on $\mathcal{S}$ such that:
\begin{equation}\label{eq:equiv_rel}
    \ket{\psi} \sim \ket{\phi} \Leftrightarrow \exists_{\lambda\neq 0} \; \ket{\psi} = \lambda \ket{\phi}.
\end{equation}
Each equivalence class of this relation is a line in $\mathcal{S}$. The set of all equivalence classes is called complex projective space $M = \mathbb{P}(\mathcal{S})$ and it is a $(2d-2)$-dimensional real ($(d-1)$-dimensional complex) differential manifold. Pure quantum states are elements of $M$. To abbreviate the notation we will denote elements of $M$ by their representatives that is simply as $\ket{\psi}$ or $\psi$ depending which is more convenient.

Any smooth curve in $M$ - $\ket{\psi(t)}$ can be represented as an action of a smooth map called evolution operator:
\begin{equation}
    \hat{U}: \mathbb{R} \ni t \mapsto \hat{U}(t) \in \mathbb U(d), \; \hat{U}(0) = \bbone,
\end{equation}
where $\mathbb U(d)$ is a unitary group, on some initial state $\ket{\psi_0}$. That is:
\begin{equation}
    \ket{\psi(t)} = \hat{U}(t)\ket{\psi_0}.
\end{equation}
On the other hand, for any unitary matrix $\hat{U}$ there exists hermitian matrix $\hat{H}$ such that:
\begin{equation}
    \hat{U} = e^{-i\hat{H}}.
\end{equation}
This leads to:
\begin{equation}\label{eq:deriv-hamil}
    \ket{\dot{\psi}} := \frac{d}{dt}\bigg|_{t=0} \ket{\psi(t)} = -i\hat{H}\ket{\psi_0},
\end{equation}
thus to any curve in $M$ we can assign at each point a Hermitian matrix that determines the tangent vector at the point. Note that this assignment is not unique. According to relation \eqref{eq:equiv_rel} states that are proportional to each other represent the same point in $M$. This implies that
\begin{enumerate}
    \item for any real $\phi(t)$ curves $\ket{\psi(t)}$ and $e^{i\phi( t)}\ket{\psi(t)}$ have the same tangent vector,\label{ambiguity1}
    \item the part of $\ket{\dot{\psi}}$ along $\ket{\psi_0}$ does not influence the tangent vector.
\end{enumerate}
We can easily get rid of the first ambiguity by assuming that $\hat{H}$ is traceless since unitaries generated by $\hat{H}$ and $\hat{H}' = \hat{H} + \phi \bbone$ differ only by a global phase $e^{i\phi}$. Thus we will identify the tangent space $T_\psi M$ with the set of traceless hermitian matrices $\mathcal{O}_0(\mathcal{S})$ keeping in mind that the second ambiguity implies
\begin{equation}\label{eq:ambiguity2}
    \left(\hat{F} - \hat{G} \right) \ket{\psi} \sim \ket{\psi} \Rightarrow V_{\hat{F}} = V_{\hat{G}},
\end{equation}
for any $\hat{F}$, $\hat{G}$ and related tangent vectors $V_{\hat{F}}$, $V_{\hat{G}}$.

In this notation the Fubini-Study symplectic form at $\psi$ is defined as \cite{kobayashi1996}
\begin{equation}
    \omega_\psi\left(V_{\hat{F}}, V_{\hat{G}}\right) = i \bra{\psi}[\hat{F}, \hat{G}] \ket{\psi} = i \langle [\hat{F}, \hat{G}] \rangle_\psi.
\end{equation}
Closeness of $\omega$ can be shown by straightforward computation (for example using Theorem 13 from \cite{spivak1999}). On the other hand, non-degeneracy comes from the observation that if for all $\hat{G}$
\begin{equation}
    \omega_\psi\left(V_{\hat{F}}, V_{\hat{G}}\right) = 0,
\end{equation}
then for $\rho_\psi = \ket{\psi}\bra{\psi}$ we have
\begin{gather}\label{eq:sympl_is_zero}
    0 = \omega_\psi\left(V_{\hat{F}}, V_{\hat{G}}\right) = \braket{[\hat{F}, \hat{G}]}_\psi = \\\nonumber
    = \mathrm{Tr}\left(\rho_\psi [\hat{F}, \hat{G}]\right) = \mathrm{Tr}\left([\rho_\psi, \hat{F}] \hat{G}\right)
\end{gather}
for all $\hat{G}$ which implies $[\rho_\psi, \hat{F}] = 0$. This is possible only when $\ket{\psi}$ is an eigenstate of $\hat{F}$. Thus by \eqref{eq:ambiguity2} the Hamiltonian $\hat F$ is equivalent to the matrix of zeros $\hat 0$ which tangent vector by \eqref{eq:deriv-hamil} is simply $0$. Hence we have $V_{\hat{F}} = V_{\hat{0}} = 0$.

Let us now consider what is the form of classical Hamiltonian for $\hat{H}$. First, note that the vector field $\psi \mapsto V_{\hat{H}}$, which for simplicity we will denote by $V_{\hat{H}}$, is Hamiltonian:
\begin{strip}
    \begin{gather*}
    d\left( \omega(V_{\hat{H}}, \cdot) \right)(V_{\hat{F}}, V_{\hat{G}}) = V_{\hat{F}}(\omega(V_{\hat{H}}, V_{\hat{G}})) 
    - V_{\hat{G}}(\omega(V_{\hat{H}}, V_{\hat{F}})) - \omega(V_{\hat{H}}, [V_{\hat{F}}, V_{\hat{G}}]) =\\
    = - \left\langle[\hat{F}, [\hat{G}, \hat{H}]] + [\hat{G}, [\hat{H}, \hat{F}]] + [\hat{H}, [\hat{F}, \hat{G}]] \right\rangle,
\end{gather*}
\end{strip}%
which by Jacobi identity is zero. Hence, there exists function $\mathcal{H}$ such that $d\mathcal{H} = -\omega(V_{\hat{H}}, \cdot)$. One possible choice of $\mathcal{H}$ is
\begin{equation}\label{eq:class-ham}
    \mathcal{H}(\psi) = \langle \hat{H} \rangle_\psi =\tr(\rho_\psi \hat H).
\end{equation}
Indeed, for any $V_{\hat F}$ we have
\begin{gather*}
    -d\mathcal{H}(V_{\hat{F}}) =\\
    =-\frac{d}{dt}\bigg|_{t=0} \bra{\psi} e^{i\hat{F}t} \hat{H} e^{-i\hat{F}t} \ket{\psi} =\\
    = i \langle [\hat{H}, \hat{F}] \rangle = \omega(V_{\hat{H}}, V_{\hat{F}}),
\end{gather*}
In fact for this choice we also have {\it equivariance}, that is 
\begin{equation}
    \mathcal{H}(U\psi) = \tr(\rho_{U\psi} \hat H)=\tr(U\rho_\psi U^\dagger \hat H).
\end{equation}
Concluding, one can view Schrödinger equation  with $\hat H$ as Hamilton equation with the Hamiltonian $\mc{H}$ given by \eqref{eq:class-ham}. 
\subsection{Relationship between local quantum and classical dynamics for composite systems}

The above construction becomes more interesting when applied to composite systems and entangled states. Consider a multipartite system
\begin{equation}\label{local-ham}
    \mathcal{S} = \mathcal{S}_{1} \otimes \ldots \otimes \mathcal{S}_L,
\end{equation}
and a {\it local} Hamiltonian that acts on every $\mathcal{S}_k$ separately
\begin{equation}\label{eq:sep_hamil}
    \hat{H} = \hat{H}_1 \otimes \bbone\otimes\ldots \otimes \bbone + \ldots \bbone \otimes\ldots\otimes\bbone \otimes   \hat{H}_L.
\end{equation}
Then note that for $\rho_\psi = \kb{\psi}{\psi}$ we have
\begin{gather}\label{admisible}
    \mathcal{H}(\psi) = \bra{\psi} \hat{H} \ket{\psi} = \mathrm{Tr}\left( \rho_\psi \hat{H} \right) =\\
    = \mathrm{Tr}\left(\rho_\psi^1 \hat{H}_1 \right) + \ldots +\mathrm{Tr}\left(\rho_\psi^L \hat{H}_L \right),
\end{gather}
where $\rho_\psi^k$ is the reduced density matrices of the subsystem $k$:
\begin{equation}
    \rho_\psi^k = \mathrm{Tr}_{\hat{k}}\rho_\psi
\end{equation}
and $\mathrm{Tr}_{\hat{k}}$ is the partial trace over all subsystems except subsystem $k$. Thus when $\hat H$ is local the values of the corresponding classical $\mathcal{H}$ depend only on the reduced density matrices. It is therefore natural to consider the map 
\begin{gather}\label{eq:momentum-map}
    \mu(\psi)=(\rho_\psi^1,\ldots,\rho_\psi^L),
\end{gather}
that assigns to a state $\psi$ the collection of its reduced density matrices. Let $K=\mathbb U(\mc S_1,\ldots,\mc S_L)$ be the local unitary group, that is: 
\begin{gather*}
    K=\{U_1\times\ldots\times U_L|\,U_k\in \mathbb U(\mc S_k)\}.
\end{gather*}
The set of all states that can be obtained from $\ket{\psi}$ using only local unitaries is the orbit, $O_\psi$, of $K$ passing through $\ket\psi$:
\begin{align}\nonumber
    O_\psi = \left\{ U\circ\ket{\psi} \;|\; U \in K\right\},
\end{align}
where $U\circ\ket{\psi}= U_1\otimes\ldots\otimes U_L\ket{\psi}$. Of course $O_\psi$ is a submanifold of $M$ and for any $\varphi \in O_\psi$ the tangent space $T_\varphi O_\psi$ is spanned by Hamiltonian operators of the form \eqref{eq:sep_hamil}. We will denote the set of all such Hamiltonian operators as $\mathcal{O}_0(\mathcal{S}_1, \ldots,\mathcal{S}_L)$. One easily checks that for $U\in K$ we have  \begin{gather}\label{equivariance}
\mu(U\circ\psi)=(\mathrm{Ad}_{U_1}\rho_\psi^1 ,\ldots,\mathrm{Ad}_{U_L}\rho_\psi^L)=\\=:\mathrm{Ad}_U\mu(\psi),
\end{gather}
where $\mathrm{Ad}_{U_k}\rho_\psi^k:=U_k\rho_\psi^kU_k^\dagger$ and
\begin{gather*}
    \mc{H}(U_1\otimes \ldots \otimes U_L\psi)=\\\nonumber \mathrm{Tr}\left(\mathrm{Ad}_{U_1}\rho_\psi^1 \hat{H}_1 \right) + \ldots +\mathrm{Tr}\left(\mathrm{Ad}_{U_L}\rho_\psi^L \hat{H}_L \right).
\end{gather*}

 Note that \eqref{equivariance} implies that $O_\psi$ is mapped by $\mu$ onto the { \it adjoint orbit} of $K$ through $\mu(\psi)$, that is
\begin{gather}
\mu(O_\psi)=O_{\mu(\psi)}:=\{\mathrm{Ad}_{U}\mu(\psi)|\,U\in K\}
\end{gather}
For any $\mu(\varphi) \in O_{\mu(\psi)}$ the tangent space $T_{\mu(\varphi)}O_{\mu(\psi)}$ is spanned by local Hamiltonians $\mathcal{O}_0(\mc{S}_1,\ldots,\mc{S}_L)$, that is for $\hat F\in \mathcal{O}_0(\mc{S}_1,\ldots,\mc{S}_L)$ the corresponding tangent vector is given by:
\begin{gather}\label{vtilde}
    \tilde{V}_{\hat F}=\frac{d}{dt}\bigg|_{t=0} (\mathrm{Ad}_{e^{-it\hat F_1}}\rho^1_\varphi,\ldots, \mathrm{Ad}_{e^{-it\hat F_L}}\rho^L_\varphi,)= \\\nonumber
    =i( [\rho^1_\varphi,, \hat{F}_1],\ldots, [\rho^L_\varphi, \hat{F}_L]).
    \end{gather}
It is well known that $O_{\mu(\psi)}$ is a symplectic manifold with the Kostant-Kirillov-Souriou symplectic form \cite{Kirillov} defined in our notation by
\begin{gather}
    \omega_{\mu(\varphi)}(V_{\hat F},V_{\hat G}):=\sum_{k=1}^L\tr(\rho_\varphi^k[\hat F_k,\hat G_k]).
\end{gather}
Moreover, for $\hat F, \hat G \in \mathcal{O}_0(\mathcal{S}_1,\ldots, \mathcal{S}_L)$ and $\ket{\varphi}  \in L_\psi$ we have
\begin{gather}\label{pullback}
    \omega_\varphi(V_{\hat F}, V_{\hat G}) =\mathrm{Tr}\left(\rho_\varphi [\hat{F}, \hat{G}]\right) =\\\nonumber
    = \sum_{k=1}^{L}\mathrm{Tr}\left(\rho_\varphi^k [\hat F_k, \hat G_k]\right) = \omega_{\mu(\varphi)}(\tilde{V}_{\hat F}, \tilde{V}_{\hat G})\\\nonumber
\end{gather}

\subsection{Partition of $M$ by spectra of reduced density matrices}
In summary, when focusing on local quantum evolutions, the corresponding classical dynamics is governed by a Hamiltonian function that depends on \(\mu(\psi)\) rather than \(\psi\). Furthermore, local quantum evolution preserves adjoint orbits. Thus, it is natural to partition \(M\) into disjoint subsets \(M_{\mu(\psi)}\) defined by 
\begin{gather}
M_{\mu(\psi)} := \mu^{-1}(O_{\mu(\psi)}),
\end{gather}
where \(M_{\mu(\psi)}\) represents the union of all local unitary orbits that map to the same adjoint orbit via \(\mu\). Seminal results from symplectic geometry \cite{Sjamaar} ensure that the spaces \(M_{\mu(\psi)}\) are connected for any state \(\psi\). These are referred to as stratified spaces; although \(M_{\mu(\psi)}\) may not be a manifold in itself, there exists an open and dense subset \(M_{\mu(\psi)}^0 \subset M_{\mu(\psi)}\) that forms a smooth manifold. Moreover, the spaces \(M_{\mu(\psi)}\) are generally not symplectic, and the restriction of \(\omega\) to \(M_{\mu(\psi)}^0\) may exhibit degeneracy. Therefore, in general, \(M_{\mu(\psi)}^0\) are presymplectic manifolds. Equation \eqref{admisible} indicates that the admissible Hamiltonians on \(M_{\mu(\psi)}\) are precisely those that depend on reduced density matrices, thereby being determined by Hamiltonians on \(O_{\mu(\psi)}\). For a state \(\varphi \in M_{\mu(\psi)}\), we define:
\begin{gather}
E(\varphi) := \text{dim} D(M_{\mu(\psi)}),
\end{gather}
where \(D(M_{\mu(\psi)})\) denotes the null distribution on \(M_{\mu(\psi)}^0\). In \cite{sawicki2011}, the relationship between the entanglement of \(\varphi \in M{\mu(\psi)}\) and \(E(\varphi)\) was established in the case of two-qubit systems (\(L=2\)). This scenario is particularly straightforward since \(M_{\mu(\psi)}\) coincides exactly with \(O_{\psi}\). In Section \ref{sec:bipartite}, we briefly review these findings, and in Section \ref{sec:multipartite}, we extend our analysis to the case of \(L>2\) qudits, where \(M_{\mu(\psi)}\) encompasses more than a single local unitary orbit. We demonstrate that \(\text{dim} D(M_{\mu(\psi)})\) is determined by the corresponding degeneracy \(\text{dim} D(O_{\mu(\psi)})\), where \(\psi\) is any state from \(M_{\mu(\psi)}^0\). This result follows from the method of symplectic reduction.
. 

\section{Bipartite case}\label{sec:bipartite}

In this section we shortly review results for $\mc{S}=\mc{S}_1\otimes \mc{S}_2$, where $\mc{S}_k\simeq\CC^d$ \cite{sawicki2011} and  add additional discussion regarding admissible hamiltonians. In this case $M_{\mu(\psi)}$ is a manifold that consists of only one orbit $O_\psi$. From equation \eqref{pullback} we see that if $\tilde{V}_{\hat F}=0$, for $\hat F\in \mathcal{O}_0(\mc{S}_1,\mc{S}_2)$, and the corresponding $V_{\hat F}\neq 0$ then the restriction of $\omega$ to $O_\psi$ is degenerate and $V_{\hat F}$ is the null vector field on $O_\psi$. Combining this observation with \eqref{vtilde} we get that if
\begin{equation}\label{eq:degeneracy_condition} [\rho_\varphi^1, \hat F_1] = [\rho_\varphi^2, \hat F_2] = 0
\end{equation}
we have $\omega_\varphi(V_{\hat F}, V) = 0$ for all $V \in T_\varphi L_\psi$, that is $V_{\hat F}$ is a null vector field. Note that \eqref{eq:degeneracy_condition} means that $\mathrm{Ad}_{e^{it\hat F}}\mu(\psi)=\mu(\psi)$, that is $\mathrm{Ad}_{e^{it\hat F}}$ stabilizes $\mu(x)$. Thus the null vector fields arise as evolutions generated by $\hat F\in \mathcal{O}_0(\mc{S}_1,\mc{S}_2)$ that stabilize $\mu(\psi)$ but do not stabilize $\psi$. This implies that the null distribution is regular and its dimension can be calculated as the difference between the dimesnions of $O_\psi$ and $O_{\mu(\psi)}$. Next, notice that for maximally entangled $\ket{\psi}$ reduced density matrices are proportional to identity thus \eqref{eq:degeneracy_condition} is satisfied for all elements of $\mathcal{O}_0(\mathcal{S}_A, \mathcal{S}_B)$ and admissible are only constant Hamiltonians. On the other hand, for separable $\ket{\psi}$ reduced density matrices are pure therefore \eqref{eq:degeneracy_condition} is satisfied only when $V_{\hat F} = 0$ (see \eqref{eq:sympl_is_zero} and discussion below). Separable $\ket \psi$ satisfies also:
\begin{equation}
    \rho_\psi = \rho_\psi^A \otimes \rho_\psi^B,
\end{equation}
thus all smooth functions on $O_\psi$ are admissible. In summary, the degree of degeneracy $E(\psi)$ is invariant under local transformations, attains its maximal value for $\ket \psi$ maximally entangled and is zero if and only if $\ket \psi$ is separable.
The value of $E(\psi)$ was computed in \cite{sawicki2011}. Let $p_1, ..., p_d$ be Schmidt coefficients of $\ket\psi$ that is:
\begin{equation}
    \ket \psi = \sum_{i=1}^d \sqrt{p_i} \ket i \otimes \ket i,
\end{equation}
in some basis $\ket i$. Additionally, let $\tilde p_1, ... \tilde p_r$ be unique values of non-zero Schmidt coefficients and $m_1, ..., m_r$ their respective numbers of occurrences. Then we have \cite{sawicki2011}: 
\begin{equation}
    E(\psi) = \sum_{i=1}^r m_i^2 - 1.
\end{equation}
Therefore, $E$ is higher for states that have many repeating non-zero Schmidt coefficients which tend to be also more entangled.

\section{Many qudits}\label{sec:multipartite}
In this sections we consider $L$-qudit system, $\mathcal{S}=\left(\CC^d\right)^{\otimes L}$, where $L\geq 3$ and we derive formula for $E(\cdot)$. 

For $M=\PP(\mc S)$, the image of the
momentum map, $\mu(M)$, is the collection of adjoint orbits of $K=\mathbb U(\mc S_1,\ldots,\mc S_L)$, where every $\mc S_k\simeq\CC^d$. Note that the reduced $1$-qudit density matrices can be
diagonalized by the adjoint action of an element of the
group $\mathbb U(\mc S_1,\ldots,\mc S_L)$. To make the situation unambiguous we impose a
concrete order on the resulting eigenvalues, for example nonincreasing.

Thus every adjoint orbit in $\mu(M)$ is uniquely determined by the point $(\bar{p}_1,\bar{p}_2,\ldots,\bar{p}_L)\in \RR_+^{(d-1)L}$, where $\bar{p}_k$ is nonincreasingly ordered spectrum of $\rho_\varphi^k$, and this way we obtain a map 
$$\Psi:M\rightarrow\RR^{(d-1)L}_+$$ that assigns
to a state $\ket\phi$ the point $(\bar{p}_1,\bar{p}_2,\ldots,\bar{p}_L)\in \RR_+^{(d-1)L}$. The set
$\Psi(M)\subset \RR^{(d-1)L}_+$ parametrizes all adjoint orbits in $\mu(M)$ and by the convexity theorem of the momentum
map \cite{Atiyah82,GS84,K84} is a convex polytope, referred to as the Kirwan polytope. For $\alpha\in \Psi(M)$ we introduce the so called {\it reduced space}
\begin{equation}
\Psi^{-1}(\alpha)/K := \{O_\varphi \;|\; \varphi\in \Psi^{-1}(\alpha)\}.
\end{equation}
By our construction, for any $\alpha\in \Psi(M)$ there is $\ket\varphi\in M$ such that $M_{\mu(\varphi)}=\Psi^{-1}(\alpha)$.
Moreover, the foundational results in symplectic geometry \cite{Sjamaar} ensure that the spaces \(\Psi^{-1}(\alpha)/K\) are symplectic in nature. This observation proves to be pivotal for the subsequent main result presented herein:
\begin{theorem}\label{thm:main}
Let $\ket{\varphi}\in M_{\mu(\psi)}^0$. Then $$E(\varphi)=\dim_\RR O_\varphi-\dim_\RR O_{\mu(\varphi)}.$$
\end{theorem}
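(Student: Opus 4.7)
The plan is to identify the null distribution of $\omega$ restricted to $M^0_{\mu(\psi)}$ via symplectic reduction, setting $\alpha:=\mu(\varphi)$. I expect the null tangent space at $\varphi$ to coincide with $T_\varphi(K_\alpha\cdot\varphi)$, the tangent to the orbit of the coadjoint stabiliser $K_\alpha$ of $\alpha$ in $K$; the dimension count
\[
\dim K_\alpha-\dim K_\varphi=(\dim K-\dim K_\varphi)-(\dim K-\dim K_\alpha)=\dim O_\varphi-\dim O_{\mu(\varphi)}
\]
will then deliver the theorem. The symplecticity of the reduced space $M^0_{\mu(\psi)}/K$ via Sjamaar--Lerman, already invoked in the paper, is what makes the reduction machinery legitimate on the smooth stratum.

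First I would decompose the tangent space. Since $\mu^{-1}(O_\alpha)=K\cdot\mu^{-1}(\alpha)$, every vector $v\in T_\varphi M^0_{\mu(\psi)}$ splits as $v=u+\xi^*$ with $u\in T_\varphi\mu^{-1}(\alpha)=\ker d\mu_\varphi$ and $\xi^*$ the infinitesimal action of some $\xi\in\mk{k}$, the ambiguity in this splitting being $T_\varphi(K_\alpha\cdot\varphi)=T_\varphi O_\varphi\cap T_\varphi\mu^{-1}(\alpha)$. I would then evaluate $\omega$ piece by piece on this decomposition. The cross terms $\omega(u,\eta^*)$ and $\omega(\xi^*,u')$ vanish by the momentum-map identity $\omega_\varphi(\xi^*,w)=\langle d\mu_\varphi(w),\xi\rangle$ applied to vectors in $\ker d\mu_\varphi$. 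The $K$-orbit piece $\omega(\xi^*,\eta^*)$ equals the Kostant--Kirillov--Souriau form on $O_\alpha$ pulled back along $\mu$, by equation \eqref{pullback}. The fibre piece $\omega(u,u')$ is governed by Marsden--Weinstein reduction at $\alpha$: on the smooth part of $\mu^{-1}(\alpha)$ the kernel of the restriction of $\omega$ is exactly $T_\varphi(K_\alpha\cdot\varphi)$.

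Combining these three ingredients, $v=u+\xi^*$ is null precisely when $u\in T_\varphi(K_\alpha\cdot\varphi)$ (forced by the fibre piece) and $\xi\in\mk{k}_\alpha$ (forced by non-degeneracy of the KKS form on $O_\alpha$ applied to the $K$-orbit piece), both of which place $v$ itself in $T_\varphi(K_\alpha\cdot\varphi)$. The reverse inclusion is immediate: for $\zeta\in\mk{k}_\alpha$ the vector $\zeta^*_\varphi$ pairs trivially with any $u'\in\ker d\mu_\varphi$ by the momentum-map identity and, because $\zeta$ stabilises $\alpha$, trivially with any $\eta^*$ via KKS. This pins down the null space at $\varphi$ as $T_\varphi(K_\alpha\cdot\varphi)$ and hence computes $E(\varphi)$ as claimed.

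The hardest step will be justifying the Marsden--Weinstein input in the stratified setting: $\mu^{-1}(\alpha)$ need not be smooth at arbitrary $\alpha$, and the $K_\alpha$-action need not be free, but restricting to the top stratum $M^0_{\mu(\psi)}$, where Sjamaar--Lerman yield a smooth symplectic quotient and a locally constant orbit type, is precisely what makes the standard reduction theorem applicable; I would spend most of the technical work spelling out that the neighbourhood of $\varphi$ in this stratum is where the quoted theorems apply verbatim. A secondary point, easy to overlook, is that the vanishing of cross terms alone would at most give $T_\varphi(K\cdot\varphi)\subseteq\ker\omega|_{T_\varphi M^0_{\mu(\psi)}}$; the decisive extra input is non-degeneracy of the KKS form on the coadjoint orbit, which upgrades ``$\xi^*$ is null in $K$-orbit directions'' to the strictly stronger ``$\xi\in\mk{k}_\alpha$'' and is what excises the transverse $O_{\mu(\varphi)}$-directions from the null space.
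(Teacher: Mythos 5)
Your proof is correct and follows essentially the same route as the paper's: both rest on the identity $\left(T_\varphi O_\varphi\right)^{\perp\omega}=\mathrm{Ker}_\varphi(d\mu)$ (equivalently, your momentum-map identity killing the cross terms), the nondegeneracy of the Kostant--Kirillov--Souriau form on $O_{\mu(\varphi)}$, and the Sjamaar--Lerman symplectic quotient on the top stratum. Yours is simply a more explicit rendering --- you pin the null space down as $T_\varphi(K_\alpha\cdot\varphi)$ and carry out the dimension count through stabilisers, where the paper only sketches the split of $\mathrm{Ker}_\varphi(d\mu)$ into an orbit-tangent part and a symplectic complement.
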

\begin{proof}
   By our construction $M_{\mu(\psi)}^0$ is a manifold. It is well known that $M_{\mu(\psi)}^0/K$, that is the so-called symplectic reduction of $M_{\mu(\psi)}^0$, is a connected symplectic manifold \cite{Sjamaar}. Consider the restriction of the symplectic form $\omega$ to $M_{\mu(\psi)}^0$. The degeneracy of this restricted form is the dimension of the null distribution $D(M_{\mu(\varphi)}^0)$.
   Let  $O_{\varphi}$ be the $K$-orbit through $\ket\varphi\in M_{\mu(\psi)}^0$. Note that \cite{guillemin84}:
   \begin{gather}
    \left(T_\varphi O_{\varphi}\right)^{\perp\omega}=\mathrm{Ker}_\varphi(d\mu)
   \end{gather}
   Thus the only candidates for null vectors in $T_\varphi M_{\mu(\psi)}^0$ are vectors that 
   are tangent to the fibres of the momentum map $\mu$, that is vectors that generate dynamics in $M_{\mu(\psi)}^0$ for which the value of $\mu$ does not change.  Such vectors can be divided into two parts. The first part contains vectors that are  tangent to the $K$-orbit through $\ket\varphi$ and by the reasoning described in Section \ref{sec:bipartite} its dimension is given $\dim_\RR O_\varphi-\dim_\RR O_{\mu(\varphi)}$. The remaining part  of the space $\left(T_\varphi O_{\varphi}\right)^{\perp\omega}$ is a symplectic space because the symplectic reduction guarantees that $M_{\mu(\psi)}^0/K$ is a symplectic manifold.
   \end{proof}
Theorem \ref{thm:main} ensures that the degree of degeneracy of $\omega$ at any given state \( \ket{\varphi} \in M_{\mu(\psi)}^0 \) corresponds to the degree of degeneracy of \( \omega \) when restricted to the manifold of locally unitary equivalent states containing \( \ket{\varphi} \). 

Moreover, for any admissible $\mathcal{H}$ the existence of null vector fields implies that we cannot uniquely determine related to it dynamic $V_{\mathcal{H}}$, in particular the correct dynamic $V_{\hat H}$. In particular for maximally entangled states the space of admissible Hamiltonian functions consists of constant functions and all vector fields are null. Thus the ambiguity in choosing Hamiltonian vector field is the largest (see also section 4 of \cite{bengtsson2007}).

\section{Summary}

In this work we have established a symplectic framework for analysing quantum entanglement,
identifying the degeneracy of the restricted symplectic form as a structural indicator of
non-separability. This geometric approach not only recovers known results in bipartite
systems but also extends naturally to multipartite settings, offering a unifying perspective
on entanglement classification under local unitary actions.

The measure introduced here can, in principle, be evaluated for different partitions of a
multipartite system into subsystems. Different partitions correspond to different symmetry
groups and hence to different momentum maps, making this approach adaptable to a variety
of physical settings. Such a procedure could provide a practical method for distinguishing
between genuinely multipartite entanglement and various forms of partial correlations. Further work could investigate its behaviour under symmetry constraints, its asymptotic properties in large-system limits, and its relation to other
geometric or information-theoretic quantifiers of entanglement.

\section{Acknowledgments}
The research was funded from the Norwegian Financial Mechanism 2014-2021 with project registration
number 2019/34/H/ST1/00636.

\printbibliography

@book{spivak1999,
    author      = "Michael Spivak",
    title       = "A Comprehensive Introduction to Differential Geometry",
    year        = "1999",
    publisher   = "Publish or Perish, INC.",
    volume      = "1",
    isbn        = "0914098705",
    edition     = "3"
}

@book{Kirillov,
    author = {Kirillov, A.A.},
    title = {Lectures on the orbit method},
    series = {Graduate Studies in Mathematics Vol. 64},
    publisher = {Providence,
RI: Amer. Math. Soc.},
    year = {2004}
}

@article{Atiyah82,
  title={Convexity and Commuting Hamiltonians},
  author={Atiyah, M. F.},
  journal={Bulletin of The London Mathematical Society},
  year={1982},
  volume={14},
  pages={1-15},
  url={https://api.semanticscholar.org/CorpusID:18142550}
}

@book{guillemin84,
	author = {Guillemin, V. and Sternberg, S.},
	title = {Symplectic techniques in physics},
	publisher = {Cambridge University Press, },
	year = {1990},
	address = {New York: }
}

@inproceedings{kostant82,
  title={Symplectic Projective Orbits},
  author={Kostant, B. and Sternberg, S.},
  year={1982},
  url={https://api.semanticscholar.org/CorpusID:115387220}
}

@article{K84,
	author = {Kirwan, F.},
	date = {1984/10/01},
	date-added = {2023-11-21 06:35:04 +0100},
	date-modified = {2023-11-21 06:35:04 +0100},
	doi = {10.1007/BF01388838},
	id = {Kirwan1984},
	isbn = {1432-1297},
	journal = {Inventiones mathematicae},
	number = {3},
	pages = {547--552},
	title = {Convexity properties of the moment mapping, III},
	url = {https://doi.org/10.1007/BF01388838},
	volume = {77},
	year = {1984},
	bdsk-url-1 = {https://doi.org/10.1007/BF01388838}}

@article{GS84,
	author = {Guillemin, V. and Sternberg, S.},
	date = {1982/10/01},
	date-added = {2023-11-21 06:33:53 +0100},
	date-modified = {2023-11-21 06:33:53 +0100},
	doi = {10.1007/BF01398933},
	id = {Guillemin1982},
	isbn = {1432-1297},
	journal = {Inventiones mathematicae},
	number = {3},
	pages = {491--513},
	title = {Convexity properties of the moment mapping},
	url = {https://doi.org/10.1007/BF01398933},
	volume = {67},
	year = {1982},
	bdsk-url-1 = {https://doi.org/10.1007/BF01398933}}

@book{arnold2013,
    author = {Arnold, V. I.},
    title = {Mathematical Methods of Classical Mechanics},
    series = {Graduate Texts in Mathematics},
    publisher = {Springer New York, NY},
    year = {2013},
    doi = {10.1007/978-1-4757-1693-1}
}

@article{Sjamaar,
    title     = {Stratified Symplectic Spaces and Reduction},
    author    = {Sjamaar, R. and Lerman, E.},
    journal   = {Ann. of Math.},
    year      = {1991},
    volume    = {(2) 134},
    pages     = {375 – 422}
}

@article{SOK14,
    title     = {Convexity of momentum map, Morse index, and
quantum entanglement},
    author    = {Sawicki, A. and Oszmaniec, M. and Ku\'s, M.},
    journal   = {Reviews in Mathematical Physics},
    year      = {2014},
    volume    = {26},
    pages     = {1450004}
}

@article{SWK13,
    title     = {When is a pure state of three qubits determined by its single-particle reduced density matrices?},
    author    = {Sawicki, A. and Walter, M. and Ku\'s, M.},
    journal   = {J. Phys. A: Math. Theor.},
    year      = {2013},
    volume    = {46},
    pages     = {055304}
}

@article{SK11,
    title     = {Geometry of the local equivalence of states},
    author    = {Sawicki, A. and Ku\'s, M.},
    journal   = {J. Phys. A: Math.
Theor.},
    year      = {2011},
    volume    = {44},
    pages     = {495301}
}

@article{MS18,
    title     = {Asymptotic properties of entanglement polytopes for large number of qubits},
    author    = {Maciążek, T. and Sawicki, A.},
    journal   = {J. Phys. A: Math. Theor },
    year      = {2018},
    volume    = {51},
    pages     = {07LT01}
}

@article{SOK12,
    title     = {Critical sets of the total variance can detect all stochastic local operations and classical communication classes of multiparticle entanglement},
    author    = {Sawicki, A. and  Oszmaniec, M. and Ku\'s, M.},
    journal   = {Phys. Rev. A },
    year      = {2012},
    volume    = {86},
    pages     = {040304(R)}
}

@article{sawicki2011,
    author      = "Adam Sawicki and Alan Huckleberry and Marek Kuś",
    title       = "Symplectic Geometry of Entanglement",
    journal     = "Commun. Math. Phys.",
    volume      = "305",
    pages       = "441–468",
    year        = "2011",
    doi         = "https://doi.org/10.1007/s00220-011-1259-0"
}

@book{bursztyn2013,
    author      = "Henrique Bursztyn",
    title       = "A brief introduction to Dirac manifolds",
    publisher   = "Cambridge University Press",
    year        = "2013",
    pages       = "4 - 38",
    series      = "Geometric and Topological Methods for Quantum Field Theory Proceedings of the 2009 Villa de Leyva Summer School",
    doi         = "https://doi.org/10.1017/CBO9781139208642.002"
}

@article{einstein1935,
  title = {Can Quantum-Mechanical Description of Physical Reality Be Considered Complete?},
  author = {Einstein, A. and Podolsky, B. and Rosen, N.},
  journal = {Phys. Rev.},
  volume = {47},
  issue = {10},
  pages = {777--780},
  numpages = {0},
  year = {1935},
  month = {5},
  publisher = {American Physical Society},
  doi = {10.1103/PhysRev.47.777},
  url = {https://link.aps.org/doi/10.1103/PhysRev.47.777}
}

@article{schrodinger1935,
  title = {Die gegenwärtige Situation in der Quantenmechanik.},
  author = {Schrödinger, E.},
  journal = {Naturwissenschaften},
  volume = {23},
  pages = {807--812},
  year = {1935},
  month = {11},
  publisher = {Springer Nature},
  doi = {10.1007/BF01491891},
}

@article{bell1964,
  title = {On the Einstein Podolsky Rosen paradox},
  author = {Bell, J. S.},
  journal = {Physics Physique Fizika},
  volume = {1},
  issue = {3},
  pages = {195--200},
  numpages = {6},
  year = {1964},
  month = {Nov},
  publisher = {American Physical Society},
  doi = {10.1103/PhysicsPhysiqueFizika.1.195},
  url = {https://link.aps.org/doi/10.1103/PhysicsPhysiqueFizika.1.195}
}

@article{cerf1997,
  title = {Negative Entropy and Information in Quantum Mechanics},
  author = {Cerf, N. J. and Adami, C.},
  journal = {Phys. Rev. Lett.},
  volume = {79},
  issue = {26},
  pages = {5194--5197},
  numpages = {0},
  year = {1997},
  month = {Dec},
  publisher = {American Physical Society},
  doi = {10.1103/PhysRevLett.79.5194},
  url = {https://link.aps.org/doi/10.1103/PhysRevLett.79.5194}
}

@article{horodecki1994,
  title = {Quantum Redundancies and Local Realism},
  author = {Horodecki, R. and Horodecki, P.},
  journal = {Physics Letters A},
  volume = {194},
  pages = {147--152},
  year = {1994},
  publisher = {Springer Nature},
  doi = {10.1016/0375-9601(94)91275-0}
}

@article{peres1996a,
  title = {Collective tests for quantum nonlocality},
  author = {Peres, Asher},
  journal = {Phys. Rev. A},
  volume = {54},
  issue = {4},
  pages = {2685--2689},
  numpages = {0},
  year = {1996},
  month = {Oct},
  publisher = {American Physical Society},
  doi = {10.1103/PhysRevA.54.2685},
  url = {https://link.aps.org/doi/10.1103/PhysRevA.54.2685}
}

@article{horodecki1996a,
   title={Separability of mixed states: necessary and sufficient conditions},
   volume={223},
   ISSN={0375-9601},
   url={http://dx.doi.org/10.1016/S0375-9601(96)00706-2},
   DOI={10.1016/s0375-9601(96)00706-2},
   number={1–2},
   journal={Physics Letters A},
   publisher={Elsevier BV},
   author={Horodecki, Michał and Horodecki, Paweł and Horodecki, Ryszard},
   year={1996},
   month={11},
   pages={1–8}
}

@article{choi1972,
    title={Positive Linear Maps on C*-Algebras},
    volume={24},
    DOI={10.4153/CJM-1972-044-5},
    number={3},
    journal={Canadian Journal of Mathematics},
    author={Choi, Man-Duen},
    year={1972},
    pages={520–529}
}

@article{terhal2000a,
   title={Bell inequalities and the separability criterion},
   volume={271},
   ISSN={0375-9601},
   url={http://dx.doi.org/10.1016/S0375-9601(00)00401-1},
   DOI={10.1016/s0375-9601(00)00401-1},
   number={5–6},
   journal={Physics Letters A},
   publisher={Elsevier BV},
   author={Terhal, Barbara M.},
   year={2000},
   month=jul, pages={319–326}
}

@misc{terhal2000b,
      title={A Family of Indecomposable Positive Linear Maps based on Entangled Quantum States}, 
      author={Barbara M. Terhal},
      year={2000},
      eprint={quant-ph/9810091},
      archivePrefix={arXiv},
      primaryClass={quant-ph},
      url={https://arxiv.org/abs/quant-ph/9810091}, 
    journal = {Linear Algebra and its Applications},
    pages = {61-73},
    volume = {323},
    doi = {10.1016/S0024-3795(00)00251-2},
    month = {11},
}

@article{vedral1998,
   title={Entanglement measures and purification procedures},
   volume={57},
   ISSN={1094-1622},
   url={http://dx.doi.org/10.1103/PhysRevA.57.1619},
   DOI={10.1103/physreva.57.1619},
   number={3},
   journal={Physical Review A},
   publisher={American Physical Society (APS)},
   author={Vedral, V. and Plenio, M. B.},
   year={1998},
   month={3},
    pages={1619–1633}
}

@article{vedral1997b,
  title = {Quantifying Entanglement},
  author = {Vedral, V. and Plenio, M. B. and Rippin, M. A. and Knight, P. L.},
  journal = {Phys. Rev. Lett.},
  volume = {78},
  issue = {12},
  pages = {2275--2279},
  numpages = {0},
  year = {1997},
  month = {Mar},
  publisher = {American Physical Society},
  doi = {10.1103/PhysRevLett.78.2275},
  url = {https://link.aps.org/doi/10.1103/PhysRevLett.78.2275}
}

@article{brukner2004,
  title = {Bell's Inequalities and Quantum Communication Complexity},
  author = {Brukner, \ifmmode \check{C}\else \v{C}\fi{}aslav and \ifmmode \dot{Z}\else \.{Z}\fi{}ukowski, Marek and Pan, Jian-Wei and Zeilinger, Anton},
  journal = {Phys. Rev. Lett.},
  volume = {92},
  issue = {12},
  pages = {127901},
  numpages = {4},
  year = {2004},
  month = {3},
  publisher = {American Physical Society},
  doi = {10.1103/PhysRevLett.92.127901},
  url = {https://link.aps.org/doi/10.1103/PhysRevLett.92.127901}
}

@article{buhrman2001a,
   title={Quantum Entanglement and Communication Complexity},
   volume={30},
   ISSN={1095-7111},
   url={http://dx.doi.org/10.1137/S0097539797324886},
   DOI={10.1137/s0097539797324886},
   number={6},
   journal={SIAM Journal on Computing},
   publisher={Society for Industrial & Applied Mathematics (SIAM)},
   author={Buhrman, Harry and Cleve, Richard and van Dam, Wim},
   year={2001},
   month={1},
    pages={1829–1841}
}

@article{cleve1997,
   title={Substituting quantum entanglement for communication},
   volume={56},
   ISSN={1094-1622},
   url={http://dx.doi.org/10.1103/PhysRevA.56.1201},
   DOI={10.1103/physreva.56.1201},
   number={2},
   journal={Physical Review A},
   publisher={American Physical Society (APS)},
   author={Cleve, Richard and Buhrman, Harry},
   year={1997},
   month={8},
    pages={1201–1204}
}

@article{bennett1996,
   title={Mixed-state entanglement and quantum error correction},
   volume={54},
   ISSN={1094-1622},
   url={http://dx.doi.org/10.1103/PhysRevA.54.3824},
   DOI={10.1103/physreva.54.3824},
   number={5},
   journal={Physical Review A},
   publisher={American Physical Society (APS)},
   author={Bennett, Charles H. and DiVincenzo, David P. and Smolin, John A. and Wootters, William K.},
   year={1996},
   month={11},
    pages={3824–3851}
}

@article{bovino2006b,
    author = {Bovino, F. and M.giardina, and Svozil, Karl and Vedral, Vlatko},
    year = {2011},
    month = {11},
    pages = {},
    title = {SPATIAL ORIENTATION BY QUANTUM TELEPATHY},
    volume = {05},
    journal = {International Journal of Quantum Information},
    doi = {10.1142/S0219749907002517}
}

@article{brukner2005,
author = {Brukner, Caslav and Paunkovic, N. and Rudolph, Terry and Vedral, Vlatko},
year = {2006},
month = {09},
pages = {365},
title = {Entanglement-assisted Orientation in Space},
volume = {4},
journal = {International Journal of Quantum Information}
}

@article{giovannetti2004,
   title={Quantum-Enhanced Measurements: Beating the Standard Quantum Limit},
   volume={306},
   ISSN={1095-9203},
   url={http://dx.doi.org/10.1126/science.1104149},
   DOI={10.1126/science.1104149},
   number={5700},
   journal={Science},
   publisher={American Association for the Advancement of Science (AAAS)},
   author={Giovannetti, Vittorio and Lloyd, Seth and Maccone, Lorenzo},
   year={2004},
   month={11},
    pages={1330–1336}
}

@article{huelga1997,
  title = {Improvement of Frequency Standards with Quantum Entanglement},
  author = {Huelga, S. F. and Macchiavello, C. and Pellizzari, T. and Ekert, A. K. and Plenio, M. B. and Cirac, J. I.},
  journal = {Phys. Rev. Lett.},
  volume = {79},
  issue = {20},
  pages = {3865--3868},
  numpages = {0},
  year = {1997},
  month = {Nov},
  publisher = {American Physical Society},
  doi = {10.1103/PhysRevLett.79.3865},
  url = {https://link.aps.org/doi/10.1103/PhysRevLett.79.3865}
}

@article{wineland1992,
  title = {Spin squeezing and reduced quantum noise in spectroscopy},
  author = {Wineland, D. J. and Bollinger, J. J. and Itano, W. M. and Moore, F. L. and Heinzen, D. J.},
  journal = {Phys. Rev. A},
  volume = {46},
  issue = {11},
  pages = {R6797--R6800},
  numpages = {0},
  year = {1992},
  month = {12},
  publisher = {American Physical Society},
  doi = {10.1103/PhysRevA.46.R6797},
  url = {https://link.aps.org/doi/10.1103/PhysRevA.46.R6797}
}

@article{jozsa2000,
  title = {Quantum Clock Synchronization Based on Shared Prior Entanglement},
  author = {Jozsa, Richard and Abrams, Daniel S. and Dowling, Jonathan P. and Williams, Colin P.},
  journal = {Phys. Rev. Lett.},
  volume = {85},
  issue = {9},
  pages = {2010--2013},
  numpages = {0},
  year = {2000},
  month = {Aug},
  publisher = {American Physical Society},
  doi = {10.1103/PhysRevLett.85.2010},
  url = {https://link.aps.org/doi/10.1103/PhysRevLett.85.2010}
}

@article{collins2002,
   title={Classical analog of entanglement},
   volume={65},
   ISSN={1094-1622},
   url={http://dx.doi.org/10.1103/PhysRevA.65.032321},
   DOI={10.1103/physreva.65.032321},
   number={3},
   journal={Physical Review A},
   publisher={American Physical Society (APS)},
   author={Collins, Daniel and Popescu, Sandu},
   year={2002},
   month={2}
}

@article{gisin1999,
   title={Quantum Cryptography on Noisy Channels: Quantum versus Classical Key-Agreement Protocols},
   volume={83},
   ISSN={1079-7114},
   url={http://dx.doi.org/10.1103/PhysRevLett.83.4200},
   DOI={10.1103/physrevlett.83.4200},
   number={20},
   journal={Physical Review Letters},
   publisher={American Physical Society (APS)},
   author={Gisin, N. and Wolf, S.},
   year={1999},
   month={11},
    pages={4200–4203}
}

@article{gisin2000,
  title={Linking Classical and Quantum Key Agreement: Is There a Classical Analog to Bound Entanglement? 
},
  author={Nicolas Gisin and Renato Renner and Stefan Wolf},
  journal={Algorithmica},
  year={2002},
  volume={34},
  pages={389-412},
  url={https://api.semanticscholar.org/CorpusID:6832116}
}

@article{horodecki2009,
  title = {Quantum entanglement},
  author = {Horodecki, Ryszard and Horodecki, Pawe\l{} and Horodecki, Micha\l{} and Horodecki, Karol},
  journal = {Rev. Mod. Phys.},
  volume = {81},
  issue = {2},
  pages = {865--942},
  numpages = {0},
  year = {2009},
  month = {Jun},
  publisher = {American Physical Society},
  doi = {10.1103/RevModPhys.81.865},
  url = {https://link.aps.org/doi/10.1103/RevModPhys.81.865}
}

@article{bengtsson2007,
    author = {Bengtsson, Ingemar},
    year = {2007},
    month = {7},
    pages = {},
    title = {A Curious Geometrical Fact About Entanglement},
    volume = {962},
    journal = {AIP Conference Proceedings},
    doi = {10.1063/1.2827333}
}

@article{dirac1972,
    author = {Dirac, Paul Adrien Maurice},
    title = {The quantum theory of the emission and absorption of radiation},
    journal = {Proc. R. Soc. Lond.},
    year = {1972},
    doi = {10.1098/rspa.1927.0039},
    volume = {114},
    issue = {767},
    pages = {243-265}
}

@article{strocchi1966,
  title = {Complex Coordinates and Quantum Mechanics},
  author = {Strocchi, F.},
  journal = {Rev. Mod. Phys.},
  volume = {38},
  issue = {1},
  pages = {36--40},
  numpages = {0},
  year = {1966},
  month = {Jan},
  publisher = {American Physical Society},
  doi = {10.1103/RevModPhys.38.36},
  url = {https://link.aps.org/doi/10.1103/RevModPhys.38.36}
}

@article{Weinberg1989,
    author = "Weinberg, Steven",
    title = "{Testing Quantum Mechanics}",
    reportNumber = "UTTG-30-87/08-89",
    doi = "10.1016/0003-4916(89)90276-5",
    journal = "Annals Phys.",
    volume = "194",
    pages = "336",
    year = "1989"
}

@article{cantoni1982,
author = {Vittorio Cantoni},
title = {{Generalized transition probability, mobility and symmetries}},
volume = {87},
journal = {Communications in Mathematical Physics},
number = {2},
publisher = {Springer},
pages = {153 -- 158},
year = {1982},
}

@article{heslot1985,
  title = {Quantum mechanics as a classical theory},
  author = {Heslot, Andr\'e},
  journal = {Phys. Rev. D},
  volume = {31},
  issue = {6},
  pages = {1341--1348},
  numpages = {0},
  year = {1985},
  month = {Mar},
  publisher = {American Physical Society},
  doi = {10.1103/PhysRevD.31.1341},
  url = {https://link.aps.org/doi/10.1103/PhysRevD.31.1341}
}

@article{page1987,
  title = {Geometrical description of Berry's phase},
  author = {Page, Don N.},
  journal = {Phys. Rev. A},
  volume = {36},
  issue = {7},
  pages = {3479--3481},
  numpages = {0},
  year = {1987},
  month = {Oct},
  publisher = {American Physical Society},
  doi = {10.1103/PhysRevA.36.3479},
  url = {https://link.aps.org/doi/10.1103/PhysRevA.36.3479}
}

@article{samuel1988,
  title = {General Setting for Berry's Phase},
  author = {Samuel, Joseph and Bhandari, Rajendra},
  journal = {Phys. Rev. Lett.},
  volume = {60},
  issue = {23},
  pages = {2339--2342},
  numpages = {0},
  year = {1988},
  month = {Jun},
  publisher = {American Physical Society},
  doi = {10.1103/PhysRevLett.60.2339},
  url = {https://link.aps.org/doi/10.1103/PhysRevLett.60.2339}
}

@article{benedict1989,
  title = {Quantum jumps, geodesics, and the topological phase},
  author = {Benedict, M. G. and Feh\'er, L. Gy.},
  journal = {Phys. Rev. D},
  volume = {39},
  issue = {10},
  pages = {3194--3196},
  numpages = {0},
  year = {1989},
  month = {May},
  publisher = {American Physical Society},
  doi = {10.1103/PhysRevD.39.3194},
  url = {https://link.aps.org/doi/10.1103/PhysRevD.39.3194}
}

@article{BRODY2001,
title = {Geometric quantum mechanics},
journal = {Journal of Geometry and Physics},
volume = {38},
number = {1},
pages = {19-53},
year = {2001},
issn = {0393-0440},
doi = {https://doi.org/10.1016/S0393-0440(00)00052-8},
url = {https://www.sciencedirect.com/science/article/pii/S0393044000000528},
author = {Dorje C. Brody and Lane P. Hughston},
keywords = {Quantum phase space, Quantum measurement and entanglement, Generalised quantum mechanics, Kibble–Weinberg theory, Quantum information and uncertainty},
abstract = {The manifold of pure quantum states can be regarded as a complex projective space endowed with the unitary-invariant Fubini–Study metric. According to the principles of geometric quantum mechanics, the physical characteristics of a given quantum system can be represented by geometrical features that are preferentially identified in this complex manifold. Here we construct a number of examples of such features as they arise in the state spaces for spin 12, spin 1, spin 32 and spin 2 systems, and for pairs of spin 12 systems. A study is then undertaken on the geometry of entangled states. A locally invariant measure is assigned to the degree of entanglement of a given state for a general multi-particle system, and the properties of this measure are analysed for the entangled states of a pair of spin 12 particles. With the specification of a quantum Hamiltonian, the resulting Schrödinger trajectories induce an isometry of the Fubini–Study manifold, and hence also an isometry of each of the energy surfaces generated by level values of the expectation of the Hamiltonian. For a generic quantum evolution, the corresponding Killing trajectory is quasiergodic on a toroidal subspace of the energy surface through the initial state. When a dynamical trajectory is lifted orthogonally to Hilbert space, it induces a geometric phase shift on the wave function. The uncertainty of an observable in a given state is the length of the gradient vector of the level surface of the expectation of the observable in that state, a fact that allows us to calculate higher order corrections to the Heisenberg relations. A general mixed state is determined by a probability density function on the state space, for which the associated first moment is the density matrix. The advantage of a general state is in its applicability in various attempts to go beyond the standard quantum theory, some of which admit a natural phase-space characterisation.}
}

@article{anandan1990,
  title = {Geometry of quantum evolution},
  author = {Anandan, J. and Aharonov, Y.},
  journal = {Phys. Rev. Lett.},
  volume = {65},
  issue = {14},
  pages = {1697--1700},
  numpages = {0},
  year = {1990},
  month = {Oct},
  publisher = {American Physical Society},
  doi = {10.1103/PhysRevLett.65.1697},
  url = {https://link.aps.org/doi/10.1103/PhysRevLett.65.1697}
}

@inbook{hughston1995,
    author = {Hughston, L. P.},
    title = {Twistor Theory},
    chapter = {Geometric aspects of quantum mechanics},
    publisher = {Routledge},
    year = {1995},
    edition = {1},
    editor = {Stephen Huggett},
    doi = {10.1201/9780203734889},
    address = {New York}
}

@article{cirelli1990,
    author = {Cirelli, R. and Manià, A. and Pizzocchero L.},
    title = {Quantum Mechanics as an Infinite-dimensional Hamiltonian System with Uncertainty Structure: Part I},
    journal = {Journal of Mathematical Physics},
    volume = {31},
    pages = {2891-2897},
    year = {1990},
    doi = {10.1063/1.528941}
}

@article{GIBBONS1992,
title = {Typical states and density matrices},
journal = {Journal of Geometry and Physics},
volume = {8},
number = {1},
pages = {147-162},
year = {1992},
issn = {0393-0440},
doi = {https://doi.org/10.1016/0393-0440(92)90046-4},
url = {https://www.sciencedirect.com/science/article/pii/0393044092900464},
author = {G.W. Gibbons},
keywords = {Schrödinger equation, density matrices, quantum states},
abstract = {It is shown how, for an n-dimensional Hilbert space, one may translate the density matrix formalism into a sort of classical probability theory on the space of quantum states, Pn-1(C). Because of the intricate blend of complex, Riemannian and symplectic geometry which arises from the Kähler structure on this manifold, under this transcription Schrödinger's equation becomes Hamilton's equation and Heisenberg's equation becomes Liouvillc's equation. The formalism suggests some natural generalizations of conventional quantum mechanics which are briefly described.}
}

@book{kobayashi1996,
  author    = {Shoshichi Kobayashi and Katsumi Nomizu},
  title     = {Foundations of Differential Geometry, Vol. 1},
  publisher = {Wiley-Interscience},
  series    = {Wiley Classics Library},
  year      = {1996},
  isbn      = {978-0471157335},
  address   = {New York}
}
\end{document}